\documentclass[11pt]{amsart}

\textheight=200mm
\textwidth=130mm

\usepackage{amsmath,amssymb,amsfonts,amsthm, amssymb} 
\usepackage{mathrsfs,latexsym} 
\usepackage{mathtools}

\usepackage[left=3cm,top=2.5cm,bottom=2.5cm,right=3cm]{geometry}

\usepackage{calrsfs}
\DeclareMathAlphabet{\mathcal}{OMS}{zplm}{m}{n}

\usepackage[mathscr]{eucal} 

\usepackage{color}
\usepackage{cite}
\usepackage{epsfig}
\usepackage{graphicx}

\usepackage[T1]{fontenc}
\usepackage[utf8]{inputenc}

\usepackage{bm}
\usepackage{cite}


\def\cb{{\mathcal B}}

\def\cd{{\mathcal D}}

\def\cf{{\mathcal F}}

\def\ch{{\mathcal H}}


\def\bc{{\mathbb C}}

\def\bn{{\mathbb N}}

\def\br{{\mathbb R}}


\def\a{\alpha}
\def\g{\gamma}        \def\G{\Gamma}
        \def\D{\Delta}
\def\eps{\varepsilon} 
     
\def\z{\zeta}

       
\def\m{\mu}



\def\om{\omega}        \def\Om{\Omega}


\newtheorem{Thm}{Theorem}

\newtheorem{Prop}[Thm]{Proposition}
\theoremstyle{definition}

\newtheorem{Rem}[Thm]{Remark} 
\theoremstyle{remark}

\usepackage{tipa}
\usepackage{amsmath}
\usepackage{amssymb}
%

\def\di{\text{d}}

\newcommand{\tr}{\textrm{Tr}}

\newcommand{\ty}[1]{\mathop{\rm {#1}}}
\newcommand{\rmd}{\textrm{d}}


\begin{document} 

%

\title[Thermodynamics of Particles Obeying Monotone Statistics]{On the Thermodynamics of Particles\\Obeying Monotone Statistics}

\author[F. Ciolli]{Fabio Ciolli}
\address{Dipartimento di Matematica e Informatica, Università della Calabria, Via Pietro Bucci, Cubo 30B,
\mbox{I-87036 Rende}, Italy}
\email{fabio.ciolli@unical.it}

\author [F. Fidaleo]{Francesco Fidaleo}
\address{Dipartimento di Matematica, Università di Roma Tor Vergata, Via della Ricerca Scientifica, 1,
 I-00133 Roma, Italy}
\email{fidaleo@mat.uniroma2.it}

\author[C. Marullo]{Chiara Marullo}
\address{Dipartimento di Matematica ``Guido Castelnuovo'', Sapienza Università di Roma, Piazzale Aldo Moro 5, I-00185 Roma, Italy}
\email{chiara.marullo@uniroma1.it}

\begin{abstract}
The aim of the present paper is to provide a preliminary investigation of the thermodynamics of particles obeying monotone statistics. To render the potential physical applications realistic, we propose a modified scheme called \textit{block-monotone}, based on a partial order arising from the natural one on the spectrum of a positive Hamiltonian with compact resolvent. The block-monotone scheme is never comparable with the weak monotone one and is reduced to the usual monotone scheme whenever all the eigenvalues of the involved Hamiltonian are non-degenerate.
Through a detailed analysis of a model based on the quantum harmonic oscillator, we can see that:
(a) the computation of the grand-partition function does not require the Gibbs correction factor $n!$ (connected with the indistinguishability of particles) in the various terms of its expansion with respect to the activity;
and (b) the decimation of terms contributing to the grand-partition function leads to a kind of ``exclusion principle'' analogous to the Pauli exclusion principle enjoined by Fermi particles, which is more relevant in the high-density regime and becomes negligible in the low-density regime, as expected.
\end{abstract}

\keywords{Monotone grand-canonical ensemble; thermodynamics of grand-canonical ensemble; exclusion principle; high- and low-density regimes}

\subjclass[2010]{82B30, 58B34, 58J37, 46F10.}

\maketitle


%
\section{Introduction}

In recent years, the investigation of exotic models has significantly increased in the hope of making some progress in solving long-standing unsolved problems involved in the physics of complex models. In this regard, we certainly must mention the question of providing a satisfactory mathematical description of the quantum electrodynamics, which obtains predictions via the renormalisation technique that are in surprisingly perfect accordance with the experiments. 

Along the same line, we can identify the models which aim to study and unify the strong interactions (i.e.,\ quantum chromodynamics) with the electroweak ones. All these models are called ``standard'' models and have the same strengths and weaknesses. That is, they are in good accordance with the experiments but not satisfactory from the mathematical point of view. The long-standing problem of unifying these three fundamental forces, which are present in nature, with the remaining one, that is, the gravitation, which was recently addressed through the use of the so-called noncommutative geometry (e.g.,\ \cite{CC}), is very far from being solved, even in a partial form.
 
We should also mention the potential relevance of the investigation of models enjoying exotic commutation relations with
some other disciplines, such as information theory and quantum computing, both of which are connected with, and relevant for, concrete applications.

Among such models, we can certainly find those associated with the so called $q$-particles, or quons, $q\in(-1,0)\bigcup(0,1)$, from the perspective of the extension to the so-called anyons, corresponding to the case when the parameter $q$ assumes values in some roots of the unity, and plektons. 

Such exotic $q$-particles are naturally associated with the following commutation {relations:} 
\begin{equation}
\label{qcra}
{\bf a}_q(f){\bf a}_q^\dagger(g)-q{\bf a}_q^\dagger(g){\bf a}(f)=\langle g,f\rangle I_\ch\,, \quad f,g\in\ch\,,
\end{equation}
$\ch$ being the one-particle space, where the creators and annihilators act on the corresponding Fock spaces.
The quons can certainly be seen as an interpolation between the particles obeying the Fermi statistics (i.e., $q=-1$) and those obeying the Bose statistics (i.e., $q=1$), passing through the $q=0$ value describing the classical particles and, thus, obeying the Boltzmann statistics. We can observe that for $q=\pm1$, the commutation rules \eqref{qcra} are reduced to the well-known ones associated with the Bose and Fermi particles, respectively (see, e.g., \cite{BR}). 

All such models are relevant to the so-called {\it quantum probability}. In fact, the Boltzmann case $q=0$, describing the statistics of classical particles concerning the physical meaning, is also known as 
{\it free} because it naturally arises from a particular case of quantum probability called {\it free probability} (see, e.g., \cite{VDN}).

In the setting of quantum probability, the various generalisations of the commutation rules \eqref{qcra} allow one to introduce and investigate exotic quantum stochastic processes (see, e.g.,~\cite{BKS}).

As is well-known in the Bose and Fermi cases, all the above-mentioned commutation rules naturally arise from the so called second quantisation, which is associated with the so-called 
{\it grand-canonical ensemble}. The various functors of the second quantisation allow one to construct the corresponding Fock spaces. The Fock space encodes the statistics that the involved particles obey and allows the involved commutation relations to be faithfully represented.

We can also remark that the grand-canonical ensemble allows one to investigate one of the most fascinating phenomena occurring in the condensate state of the matter, involving bosons, that is, the {\it Bose--Einstein Condensation} in the fundamental state (see, e.g., \cite{BR, Hu}). In \cite{AF}, it is shown that such a phenomenon of condensation also appears in the case of Bose-like quons, that is, when 
$q\in(0,1]$.

Returning to exotic commutation relations and their applications in quantum probability, we can cite the Boolean and Monotone ones. As in the case of all the models mentioned above, they satisfy commutation relations falling into the general form described in \cite{BS}, (Corollary~3.2),
 and are therefore associated with a suitable Fock space. The Boolean Fock space is the simplest non-trivial example of second quantisation, because only one particle can be created and/or annihilated. In fact, it describes the absorption of a single photon, at most, from an apparatus (see \cite{VW}). 

The monotone statistics of particles, independently introduced in \cite{Lu, M}, do not seem to have any evident physical application. The arising Fock space is easily constructed, as described in Section \ref{prel}, using the monotonic prescription induced by a totally ordered orthonormal basis of the one-particle Hilbert space.

Now, we can also point out the well-known deep connection between the second quantisation scheme and the equilibrium statistical mechanics (see, e.g., \cite{BR, Hu}). Indeed, starting from a Fock space constructed by taking into account the statistics of the involved particles, we can compute, at least in principle, the so-called grand-partition function. Such a crucial function is supposed to encode all the thermodynamic properties enjoyed by a large number (of the order of the Avogadro number $N_{\rm A}\sim10^{23}$) of 
involved particles. This is certainly true for the Bose and Fermi cases and also, due to its simplicity, for the boolean one, in which the indistinguishability of the particles plays no role. 

Due to the Gibbs paradox (cf. \cite{Hu}), the computation of the grand-partition function in the Boltzmann case, in terms of the associated full Fock space, deserves a suitable correction due to the supposed indistinguishability of the involved particles (see below). The general case of the quons (i.e., $q\in(-1,0)\cup(0,1)$) is differently solved in \cite{CF}, since the necessarily ``deformed'' statistics that such exotic particles obey are completely unknown.

The case of non-interacting particles obeying monotone statistics, simply called {\it monotone particles} in the following pages, is unclear for two reasons. The first one is that the concrete physical applications of such a model are completely unknown. The second one is that we do not know whether the monotone scheme directly encodes the principle of the indistinguishability of particles, the latter being a fundamental prescription for the development of equilibrium statistical mechanics.

Taking into account all the previous considerations, it becomes natural to address the investigation of the thermodynamic properties of particles obeying the monotone prescription, which are encoded in the monotone Fock space. Unfortunately, since there is no natural, total order of the one-particle subspace on an orthonormal basis, this investigation deserves an appropriate preliminary analysis.

A simple one-particle physical system confined in a finite volume is essentially described by a Hamiltonian $H$, which is a self-adjoint positive operator with compact resolvent acting on a separable Hilbert space $\ch$. The statistics (i.e., Bose/Fermi or Boltzmann) of very large systems formed of a number of the order of the Avogadro number of non-interacting particles is encoded in the corresponding Fock space. 

Since there is a natural order of the eigenvectors of $H$ induced by the corresponding eigenvalues (i.e., the energy levels of the system under consideration), one is tempted to use such an order to implement the monotone scheme for models of statistical mechanics. This can be carried out only when the eigenvalues of $H$ all have a multiplicity of 1 or, in simple terms, when any energy level of the model is non-degenerate. 

Unfortunately, this is not the case for all concrete models when the degeneracy of all the energy levels increases to infinity in the thermodynamic limit (i.e., when, in particular, the volume of the system tends to occupy the whole environment), in which case the so-called ``passage to the continuum'' can be performed (see, e.g., \cite{Hu, FV}). 

The passage to the continuum is the fundamental tool used to investigate the thermodynamic properties of more realistic models in which the one-particle Hamiltonian has a continuum spectrum, such as a free particle living in 
$\br^3$.

In the present paper, we propose a method that can be used to overcome this basic difficulty and, thus, take into account the possible degeneracy of the energy levels. Indeed, we simply generalise the monotone prescription to index sets, which are merely partially ordered according to those arising from the spectrum of a positive, compact resolvent Hamiltonian with possible degenerate energy levels.
This model, called {\it block-monotone} in the following pages, which is expected to be more suitable for physical applications, is described in Section \ref{mopa}. 

Since the grand-partition function of a system associated with (block-)mo\-no\-to\-ne particles is not directly computable in most infinite dimensional cases, the remaining part of the paper is devoted to a detailed analysis of a simple model formed of infinite uncoupled quantum harmonic oscillators. Since the corresponding Hamiltonian has non-degenerate eigenvalues, such a computation falls into the monotone scheme. The grand-partition function for such a model is computed in Section \ref{1mic1}.

Section \ref{2mic2} is then devoted to the explicit computation of the statistical weights appearing in the expansion of the monotone grand-canonical partition function and to a refined study of the high- and low-density regimes. Such an investigation leads to the following relevant facts.

First of all, the computation of such statistical weights suggests that the Gibbs correction factor $n!$, connected with the indistinguishability of particles in the various terms of its expansion with respect to the activity, directly appears in the low-density regime, that is, when the temperature of the system becomes increasingly higher. This suggests that the monotone scheme directly encodes the indistinguishability of the involved particles.

Secondly, the decimation of the terms contributing to the grand-partition function provides a kind of ``exclusion principle'' analogous to the Pauli exclusion principle observed for Fermi particles. Such an exclusion principle appears to be more relevant to the high-density regime and becomes negligible in the low-density regime, as expected.

The last part of Section \ref{2mic2} is devoted to a refined comparison between the grand-partition functions relative to the Boltzmann and monotone models, allowing us to estimate the correction    
of the relevant thermodynamic quantities, such as the average number, in the low-density regime. 

To conclude the present introduction, we point out that this preliminary investigation seems to provide a promising perspective concerning the potential physical applications of the block-monotone scheme, which we plan to return to in a future work.

\section{Preliminaries}
\label{prel}

\textbf{One-particle Hamiltonian.}
We start with a system whose Hamiltonian $H$ is a self-adjoint positive (i.e., $\sigma(H)\subset[0,+\infty)$) operator with compact resolvent, acting on a separable Hilbert space $\ch$, called the {\it one-particle space}.

In such a situation, the spectrum $\sigma(H)$ is formed of isolated points, accumulating at $+\infty$ if $\ch$ is infinite dimensional. In addition, the multiplicity 
$g(\eps)$ of each eigenvalue $\eps\in\sigma(H)$ is finite. In summary, by considering the resolution of the identity of $H$, we obtain 
$I\equiv I_\ch=\sum_{\eps\in\sigma(H)}P_{\eps}$, 
$$
H=\sum_{\eps\in\sigma(H)}\eps P_{\eps},\,\,\text{and}\,\,g(\eps)=\text{dim}\big(\text{Ran}(P_{\eps})\big)<\infty\,.
$$

Let $k_{\rm B}\approx 1.3806488\times10^{-23}\,
J K^{-1}$ be the Boltzmann constant, and $\beta:=\frac1{k_B T}$ the ``inverse temperature''. Assuming that $e^{-\beta H}$ is trace class for each $\beta>0$, we can define the {\it partition function}
$\z:=\ty{Tr}(e^{-\beta H})$. 

\textbf{The grand-partition function.} 
Here, we define the {the grand-partition function} in a relatively general framework relative to a gas comprising non-interacting particles obeying rather general statistics and thus potentially suitable for physical applications. The knowledge of such grand-partition functions plays a crucial role in the so-called {\it equilibrium statistical mechanics}. The standard method for such an analysis is the so-called {\it second quantisation}, (see, e.g., \cite{BR, Hu}).

Indeed, for the one-particle Hilbert space $\ch$, we define the so-called {\it full Fock space} $\cf_0(\ch)\equiv\cf$, given by
$$
\cf:=\bigoplus_{n=0}^{+\infty}\underbrace{\ch\otimes \cdots\otimes\ch}_{\textrm{n-times}}\,,
$$
with the convention that $\underbrace{\ch\otimes \cdots\otimes\ch}_{\textrm{0-times}}:=\bc\equiv\bc\Om$, where $\Om$ is the so-called {\it vacuum vector}. The {\it number operator} $N$ has a clear meaning (see e.g., \cite{BR}).

For a linear operator $A$ with domain $\cd\subset\ch$, we define
\begin{align*}
\di\G_o(A)\lceil_{\cd\otimes \cdots\otimes\cd}:=&A\otimes I\otimes\cdots\otimes I
+I\otimes A\otimes\cdots\otimes I\\
+\cdots+&I\otimes \cdots\otimes A\otimes I+I\otimes \cdots\otimes I\otimes A\,,
\end{align*}
and extend it to the whole Fock space by linearity. If $A$ is self-adjoint, the closure $\di\G(A)$ of $\di\G_o(A)$ will still be self-adjoint (see, e.g., \cite{BR}). Note that $\di\G(I_\ch)$ provides the number operator.

Now, let $P$ be a self-adjoint projection acting on $\cf$. For a fixed positive operator $H$ (i.e., a Hamiltonian) and the parameters $\beta>0$ (the inverse temperature) and $\m\in\br$ (the chemical potential), such that
$ Pe^{-\beta \text{d}\G(H-\m I)}P\equiv Pe^{-\beta (\text{d}\G(H)-\m N)}P $
is trace class, we define the \textit{grand-partition function} as
\begin{equation}
\label{gpfp}
Z_P\equiv Z_{H,P}(\beta,\m):=\tr\big(Pe^{-\beta \di\G(H-\m I)}P\big)\,.
\end{equation}

The most important cases, describing the thermodynamics of Bose and Fermi gases, are those when $P$ is the self-adjoint projections onto the completely symmetric and antisymmetric subspaces
(with respect to the natural action of the permutations on $\cf$), respectively.

When $P$ is the identity operator $I\equiv I_\cf$, the corresponding grand-partition function will describe the thermodynamics of classical particles, that is, those obeying the Boltzmann statistics. Unfortunately, this is not the case, as explained in \cite{CF, W}. Below, we outline how it is possible to recover such a grand-partition function in the Boltzmann case.

We can also remark that the $q$-deformed Fock space $\cf_q(\ch)$ (e.g., \cite{BKS} and \cite{DF} for the arising ergodic properties) could be used to compute the grand-partition function for the so-called {\it quons} and, thus, their thermodynamics. Unfortunately, in this case, too, the second quantisation method does not work. The grand-partition function for the free gas of quons is entirely computed in \cite{CF} without using the $q$-deformed Fock space.

The so-called {\it Boolean} (e.g., \cite{Fbo}) and {\it monotone} (see below) models might also be described as outlined above. This is certainly true for the Boolean case, in which $\cf_{\rm boole}(\ch)=\bc\bigoplus\ch$ and, thus, there is no question about the indistinguishability of the particles. We will see, in the forthcoming analysis, that this is also the case for the monotone model and its generalisations addressed in the present paper.

\textbf{Monotone Fock space.} 
For the reader's convenience, we outline some basic facts regarding monotone Fock spaces and their fundamental operators (see \cite{Lu, M, Boz} for more details). Here, we define a generalisation of the monotone particles which is suitable for physical applications. However, as a particular case arising directly from quantum physics, we study, in some detail, the thermodynamics of a simple model satisfying the monotone statistics/commutation relations.
For the interested reader, we point out the existence of new investigations concerning exotic commutation relations that are related to those previously mentioned and might have potential physical applications (see \cite{Boz23}).

For $k\geq 1$, denoted by $I_k:=\{(i_1,i_2,\ldots,i_k) | i_1< i_2 < \cdots <i_k,\, i_j\in \mathbb{N},\, j=1,\dots,k\}$, the class of all the ordered sequences of natural numbers are of length $k$. For $k=0$, we take $I_0:=\{\emptyset\}$. If $k\geq 0$, the Hilbert space $\ch_k:=l^2(I_k)$ is called the $k$-particles space. In particular, the $0$-particle space $\ch_0=l^2(\emptyset)$ is identified with the complex scalar field $\mathbb{C}$. The monotone Fock space is then defined as $\cf_{\rm m}:=\bigoplus_{k=0}^{\infty} \ch_k$.

With any increasing sequence $\a=\{i_1,i_2,\ldots,i_k\}$ of natural numbers, we canonically associate the vector $e_\a$, which, for all such sequences $\a$, provides  
the canonical basis of $\cf_{\rm m}$. For each pair of such sequences $\a=\{i_1,i_2,\ldots,i_k\}$, $\beta=\{j_1,j_2,\ldots,j_l\}$, we state that $\a < \beta$ if $i_k < j_1$. By convention, $I_0<\a$ for each $\a\neq I_0$. 

In other words, if $\{e_n\mid n\in\bn\}$ is the canonically ordered basis of $\ell^2(\bn)$ (or any ordered basis of a separable Hilbert space), the $n$-particle space is generated by the vectors $e_\a\equiv e_{j_1}\otimes e_{j_2}\otimes\cdots\otimes e_{j_n}$ whenever $\a=\{j_1,j_2,\dots,j_n\}$ with $j_1<j_2<\cdots<j_n$. If we relax the last condition by merely assuming that $j_1\leq j_2\leq\cdots\leq j_n$, we will obtain the so-called {\it weakly monotone} Fock space (see, e.g., \cite{Boz}).
Note that $\cf_{\rm m}$ and $\cf_{\rm wm}$ are the range of the self-adjoint projections $P_{\rm m}$ and $P_{\rm wm}$ acting on the full Fock space $\cf$: 
$$
\cf_{\rm m}=P_{\rm m}\cf\,\,\,\text{and}\,\,\, \cf_{\rm wm}=P_{\rm wm}\cf\,.
$$

Even if this is not used in the forthcoming analysis, we can report the structure of the monotone creation and annihilation operators and the generating commutation relations, which are nevertheless useful for application to quantum probability. Indeed, 
the monotone creation and annihilation operators are, respectively, for any $i\in \mathbb{Z}$, given by
\begin{equation}
\label{mca}
a^\dagger_i(e_{(i_1,i_2,\ldots,i_k)}):=\left\{
\begin{array}{ll}
e_{(i,i_1,i_2,\ldots,i_k)} & \text{if}\, \{i\}< \{i_1,i_2,\ldots,i_k\}\,, \\
0 & \text{}\, \\
\end{array}
\right.
\end{equation}
otherwise,
\begin{equation}
\label{mca1}
a_i(e_{(i_1,i_2,\ldots,i_k)}):=\left\{
\begin{array}{ll}
e_{(i_2,\ldots,i_k)} & \text{if}\, k\geq 1\,\,\,\,\,\, \text{and}\,\,\,\,\,\, i=i_1\,. \\
0 & \text{}\, \\
\end{array}
\right.
\end{equation}

One can verify that $||a^\dagger_i||=||a_i||=1$ (see e.g., \cite{Boz}, Proposition 8). 
Moreover, $a^\dagger_i$ and $a_i$ are mutually adjoint and satisfy the following relations:
\begin{equation*}
\begin{array}{ll}
  a^\dagger_ia^\dagger_j=a_ja_i=0 & \text{if}\,\, i\geq j\,, \\
  a_ia^\dagger_j=0 & \text{if}\,\, i\neq j\,.
\end{array}
\end{equation*}

In addition, the following commutation relation, designated in the weak operator topology of $\cb(\cf_{\rm m})$ and falling into the general class of commutation relations managed in Corollary 3.2 of \cite{BS}
for applications in quantum {probability,} 
\begin{equation*}
a_ia^\dagger_i={\bf 1}\hspace{-3.5pt}{{\rm I}}-\sum_{k\leq i}a^\dagger_k a_k
\end{equation*}
is also satisfied.

For some properties of monotone systems, including their ergodic properties, see, e.g.,~\cite{Fmed} and the literature cited therein.

\textbf{The grand-partition function for the Boltzmann case.}
The Boltzmann (or classical) case is very particular because, in Boltzmann statistics, the Gibbs paradox (e.g.,\ \cite{Hu}) takes place and, consequently, we should suitably correct the statistical weights. 

As for the computation of $\ty{Z}_{\pm 1}$ in the Bose and Fermi cases (e.g., \cite{BR, Hu}), it might be natural to use the full Fock space $\cf(\ch)$ and the \emph{grand-canonical Hamiltonian} $K:=\rmd\G(H)-\m N$, as explained above, provided that $K$ is trace class for a fixed $\beta$ and $\m$. This corresponds to taking $P=I_\cf$ in \eqref{gpfp}.

This easy computation is reported in \cite{W}, obtaining 
\begin{equation*}
\tr\left(e^{-\beta K}\right)=\frac{1}{1-\z e^{\beta\m}}\,,
\end{equation*}
which still holds for $\m<\min\sigma(H)$.

For the reasons explained in \cite{CF, W}, such a formula is unrealistic. 
However, the correct formula should be
$Z_0=e^{\z e^{\beta\m}}$. 

After defining the \emph{fugacity}, also denoted as the \emph{activity}, through $z:=e^{\beta \m}$, we have: 
\begin{equation}
\label{gfip}
\tr\left(e^{-\beta K}\right)=\sum_{n=0}^{+\infty}\z^n z^n\,,\quad 0\leq z<1\,.
\end{equation}

It is interesting to see that, if one corrects \eqref{gfip} with the weight $n!$ in the denominator of the series, thus taking into account the indistinguishability of particles, as it is customary to avoid the Gibbs paradox, we obtain the correct formula:
\begin{equation}
\label{gfip000}
Z_0=\sum_{n=0}^{+\infty}\frac{\z^n}{n!}z^n=e^{\z z}\,.
\end{equation}

\textbf{Harmonic oscillator.} 
Since we provide a detailed study of the simple model formed of non-interacting harmonic oscillators, for the reader's convenience, we report some basic facts that are used in the following analysis. Consequently, relative to the Boltzmann statistics, we compute the relative grand-partition function at the inverse temperature $\beta$ and activity $z=e^{\beta\m}$, $\m$ being the chemical potential.

Indeed, given that $K$ acts as the Hook strength of the spring and $m$ as the mass of the involved particle, it is well-known that the spectrum of the Hamiltonian of this model consists of non-degenerate eigenvalues, given by:
\begin{equation}
\label{svham}
\sigma(H)=\Big\{\hbar\om(n+1/2)\mid n\in\bn\Big\}\,,
\end{equation}
where $\om:=\sqrt{K/m}$ is the given frequency. 

In this way, the partition function is given by:
$$
\z\equiv\tr\big(e^{-\beta H}\big)=e^{-\frac{\beta\hbar\om}2}\sum_{n=0}^{+\infty}\big(e^{-\beta\hbar\om}\big)^n=\frac{e^{-\frac{\beta\hbar\om}2}}{1-e^{-\beta\hbar\om}}
=\frac{e^{\frac{\beta\hbar\om}2}}{e^{\beta\hbar\om}-1}=\frac1{2\sinh\big(\frac{\beta\hbar\om}2\big)}\,.
$$

After taking into account the Gibbs correction (e.g., \cite{CF, Hu, W}), for the grand-partition function, we obtain:
$$
Z_0=e^{z\z}=\sum_{n=0}^{+\infty}\frac{z^ne^{\frac{n\beta\hbar\om}2}}{n!}\Big(\frac1{e^{\beta\hbar\om}-1}\Big)^n\,.
$$

\section{Block-Monotone Particles}
\label{mopa}

The present section is devoted to a generalisation of the monotone prescription for the statistics of the particles which, on the one hand, is more suitable for potential physical applications and, on the other hand, is always different from the weak monotone scheme briefly outlined above. 

For such a purpose, we consider an index set $I$, being necessarily finite or countable, which is a finite or countable disjoint union of finite sets. Indeed, $I:=\bigsqcup_{j=0}^{+\infty}I_j$,
where $|I_j|<+\infty$, $j=0,1,\dots\,\,$. The set $I$ is naturally partially ordered, because if $k_j,l_j$ are in the same subset $I_j$, there is no pre-assigned order between them. Conversely, if 
$k_1\in I_{j_1}$ and $k_2\in I_{j_2}$, then $k_1\prec k_2\iff j_1<j_2$.

Such a picture is suggested by the potential physical applications. In fact, a positive Hamiltonian $H$ with compact resolvent acting on a separable Hilbert space $\ch$, as described in Section \ref{prel},
induces a natural order, as shown above, on the natural basis of $\ch$, formed of the eigenvectors associated with the eigenvalues $\{\eps_j\}$ of $H$, where the finite cardinality of the involved subsets are given by the degeneracies $g_j$ of the eigenvalues $\eps_j$. 
The picture arising from this analysis is defined as {\it {block-monotone.}}
The corresponding block-monotone Fock space $\cf_{\rm bm}$ and the relative creation and annihilator operators are easily constructed as follows below.

Let $\{e_j\mid j\in I\}$ be an orthonormal basis of $\ch$ equipped with the previously described partial order. Typically, such a partial order is induced by a positive Hamiltonian with compact resolvent. As noted above, 
$\cf_{\rm bm}$ is a subspace of the full Fock one $\cf\equiv\cf_0$, and on the $n$-particle subspace, the $n$-particle block monotone subspace is generated as follows below.

Such an $n$-particle subspace is generated by all the sequences of the elementary (orthonormal) tensors $e_{k_1}\otimes\cdots e_{k_n}$ with the condition 
$k_1<k_2<\dots<k_n$ relative to the partial order defined above. The block monotone creator and annihilator operators assume the same form as in \eqref{mca} and \eqref{mca1}, respectively, according to the above partial order. 

We denote $P_{\rm bm}$ as the self-adjoint projection acting on the full Fock space projected onto $\cf_{\rm bm}$. This allows us to compute the grand-partition $Z_{\rm bm}$ according to \eqref{gpfp}. We can now explicitly compute such a grand-partition for the simplest non-trivial finite dimensional case, where $\ch$ is generated by the orthonormal basis $\{(e_1,e_2), e\}$, the eigenvalues of a Hamiltonian
whose eigenvalues are $h,k$, with a multiplicity of 2 and 1, respectively. We express such a grand-partition function in terms of the activity $z=e^{\beta\m}$.

In this simple situation, the block-monotone Fock space $\cf_{\rm bm}$ ends with the two-particle subspace and is given by:
$$
\cf_{\rm bm}=\bc\bigoplus\Big(\big(\bc e_1\oplus\bc e_2)\oplus\bc e\Big)\bigoplus\Big(\bc(e_1\otimes e)\oplus\bc(e_2\otimes e)\Big)\,.
$$

Correspondingly, the grand-partition function is given by:
$$
Z_{\rm bm}(z,\beta)=1+z(2e^{-\beta h}+e^{-\beta k})+2z^2e^{-\beta(h+k)}\,.
$$

We end the present section by noting that, if all the $I_j$ are singletons (or empty sets), the block-monotone scheme will be reduced to the usual monotone one. In the previous example, the comparison between the block-monotone scheme and the monotone one does not depend on the order that we fix on the first subset of eigenvectors of $H$, leading to:
$$
Z_{\rm m}(z,\beta)=1+z(2e^{-\beta h}+e^{-\beta k})+z^2\big(e^{-2\beta h}+2e^{-\beta(h+k)}\big)\,.
$$
In the forthcoming analysis, we study, in some generality, a non-trivial situation of this kind. 

Conversely,
the block-monotone scheme is never comparable with the weakly monotone one. Indeed, the weakly monotone version of the last example provides five additional elements on the basis of the two-particle subspace plus non-trivial contributions involving all the $n$-particle subspaces. Indeed,
\begin{align*}
Z_{\rm wm}(z,\beta)=&1+z\big(2e^{-\beta h}+e^{-\beta k}\big)+z^2\big(4e^{-2\beta h}+2e^{-\beta(h+k)}+e^{-2\beta k}\big)\\
+&\sum_{n=3}^{+\infty} z^na_n(h,k,\beta)\,.
\end{align*}

\section{The Grand-Partition Function for Monotone Particles}
\label{1mic1}

Since the explicit computation of the monotone grand-partition function  is not available for most infinite dimensional cases, we reduce the matter to the simple case of the one-dimensional quantum harmonic oscillator briefly described in Section \ref{prel}. The spectrum of the involved Hamiltonian \eqref{svham} is formed of multiplicity-one eigenvectors. Therefore, in such a case, the block-monotone model described in Section 
\ref{mopa} is reduced to the usual monotone one. 

Denoting such a monotone grand-partition function relative to the quantum harmonic oscillator as $Z_{\rm m}$, we obtain the following:
\begin{Prop}
\label{gpmo}
For the grand-partition function $Z_{\rm m}$, we have
$$
Z_{\rm m}=1+\sum_{n=1}^{+\infty}z^ne^{\frac{n\beta\hbar\om}2}\prod_{k=1}^n\frac1{\big(e^{k\beta\hbar\om}-1\big)}\,.
$$

In addition, $0\leq Z_{\rm m}\leq Z_{0}$, where $Z_{0}$ is the Boltzmann grand-partition function given in \eqref{gfip000}, and thus
$Z_{\rm m}$ converges for all $z\geq0$ and $\beta>0$.
\end{Prop}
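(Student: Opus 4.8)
The plan is to evaluate the trace in \eqref{gpfp} directly on the canonical basis of $\cf_{\rm m}$ and then to recognise the resulting series as a classical $q$-series. Recall that the $n$-particle monotone subspace has orthonormal basis $\{e_\a:\a=(j_1,\dots,j_n),\ 0\le j_1<\cdots<j_n\}$, and that on each such $e_\a$ the grand-canonical Hamiltonian $\rmd\G(H)-\m N$ acts diagonally with eigenvalue $\sum_{i=1}^n(\eps_{j_i}-\m)$, where $\eps_j=\hbar\om(j+\tfrac12)$. Since $P_{\rm m}$ is precisely the orthogonal projection onto $\cf_{\rm m}$ and every summand below is nonnegative, Tonelli's theorem lets me write the trace as an iterated sum in $[0,+\infty]$,
\begin{equation*}
Z_{\rm m}=\sum_{n=0}^{+\infty}z^n\sum_{0\le j_1<\cdots<j_n}e^{-\beta\sum_{i=1}^n\eps_{j_i}}
=1+\sum_{n=1}^{+\infty}z^ne^{-\frac{n\beta\hbar\om}2}\sum_{0\le j_1<\cdots<j_n}q^{\,j_1+\cdots+j_n}\,,
\end{equation*}
where $z=e^{\beta\m}$ and $q:=e^{-\beta\hbar\om}\in(0,1)$ for $\beta>0$.

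The crux of the formula is then the Euler identity (a form of the $q$-binomial theorem), obtained by reading off the coefficient of $x^n$ in $\prod_{j\ge0}(1+xq^{\,j})$:
$$\sum_{0\le j_1<\cdots<j_n}q^{\,j_1+\cdots+j_n}=\frac{q^{\,n(n-1)/2}}{\prod_{k=1}^n(1-q^k)}\,.$$
Substituting this into the previous display and rewriting each factor as $1-q^k=q^k(e^{k\beta\hbar\om}-1)$ to push all exponentials into the target shape, the powers of $q$ collapse: the half-integer shift contributes $q^{\,n/2}$, which combines with $q^{\,n(n-1)/2}$ to $q^{\,n^2/2}$, while the denominator supplies $q^{\,n(n+1)/2}$, and the net exponent $n^2/2-n(n+1)/2=-n/2$ reproduces exactly the stated term $z^ne^{n\beta\hbar\om/2}\prod_{k=1}^n(e^{k\beta\hbar\om}-1)^{-1}$. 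I expect this exponent bookkeeping to be the most error-prone step, although it is routine once the $q$-identity is in hand.

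For the inequality $0\le Z_{\rm m}\le Z_0$ I would compare the two series coefficientwise in $z^n$; both have nonnegative coefficients for $z\ge0$ and $\beta>0$, so it suffices to dominate term by term. Writing $[k]_q:=1+q+\cdots+q^{k-1}=(1-q^k)/(1-q)$ and $[n]_q!:=\prod_{k=1}^n[k]_q$, the factorisation $\prod_{k=1}^n(1-q^k)=(1-q)^n[n]_q!$ together with $\z^n=q^{\,n/2}(1-q)^{-n}$ reduces the ratio of the monotone coefficient to the Boltzmann coefficient $\z^n/n!$ of \eqref{gfip000} to the single quantity $q^{\,n(n-1)/2}\,n!/[n]_q!$. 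The elementary bound $[k]_q\ge k\,q^{\,k-1}$, valid because each of the $k$ summands is at least the smallest one $q^{\,k-1}$, yields $[n]_q!\ge n!\,q^{\,n(n-1)/2}$, so that ratio is at most $1$. Summing the dominated series gives $0\le Z_{\rm m}\le Z_0=e^{\z z}<+\infty$ for all $z\ge0$ and $\beta>0$, which is also the asserted convergence. The only genuinely non-bookkeeping ingredient here is the recognition of the $q$-factorial and the one-line estimate $[k]_q\ge k\,q^{\,k-1}$.
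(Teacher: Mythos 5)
Your proof is correct and takes essentially the same route as the paper: both evaluate the trace diagonally over the canonical basis of the monotone Fock space, reduce the $n$-particle contribution to the sum over strictly increasing indices with the closed form $q^{n(n-1)/2}/\prod_{k=1}^n(1-q^k)$, and obtain $0\le Z_{\rm m}\le Z_0$ by the same term-by-term bound --- your estimate $[k]_q\ge k\,q^{k-1}$ is exactly the paper's inequality $e^{k\beta\hbar\omega}-1\ge k\big(e^{\beta\hbar\omega}-1\big)$ rewritten in $q$-notation. The only cosmetic difference is that you invoke Euler's $q$-binomial identity as a known result, whereas the paper derives it inline by evaluating the nested geometric sums.
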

\begin{proof}
We start the second half by noting that:
$$
e^{k\beta\hbar\om}-1=\big(e^{\beta\hbar\om}-1\big)\sum_{l=0}^{k-1}e^{l\beta\hbar\om}\geq k\big(e^{\beta \hbar\om}-1\big)\,,
$$
and, thus,
$$
0\leq\prod_{k=1}^n\frac1{\big(e^{k\beta\hbar\om}-1\big)}\leq \frac1{n!\big(e^{\beta\hbar\om}-1\big)^n}\,.
$$

Therefore, $Z_{\rm m}\leq Z_{0}<+\infty$.

Concerning the first half, taking into account the exclusion rule arising from the monotone assumption, it is straightforward to verify that, for the contribution relative to the $n$-particle subspace, $n\geq1$,
\begin{align*}
\tr\Big(&P_{\rm m}e^{-\beta\di\G(H)}P_{\rm m}\lceil_{\underbrace{\ch\otimes \cdots\otimes\ch}_{\textrm{n-times}}}\Big)=e^{-\frac{n\beta\hbar\om}2}
\sum_{k_1=0}^{+\infty}e^{k_1\beta \hbar\om}\sum_{k_2=k_1+1}^{+\infty}e^{k_2\beta\hbar\om}\\
&\cdots\cdots\sum_{k_n=k_{n-1}+1}^{+\infty}e^{k_n\beta \hbar\om}=e^{\frac{n\beta\hbar\om}2}\prod_{k=1}^n\frac1{\big(e^{k\beta\hbar\om}-1\big)}\,.
\end{align*}
\end{proof}

Now, we can compare both grand-partition functions $Z_{\rm m}$ and $Z_{0}$. Indeed, after defining
$$
Z_{\#}=\sum_{n=0}^{+\infty}a^{(\#)}_n(\beta )z^n\,,
$$
$\#$ 
standing for ``0'' and ``monotone'', we can address two physically interesting, cases, $\beta \downarrow0$ (high-energy/low-density regime) and $\beta \uparrow+\infty$ (low-energy/high-density regime). 

For the high energy case,
\begin{equation}
\label{facst}
a^{({\rm m})}_n(\beta )=\frac1{(\hbar\om)^n}\Big(\frac1{\beta }\Big)\Big(\frac1{2\beta }\Big)\cdots\Big(\frac1{n\beta }\Big)+o\Big(\frac1{\beta ^n}\Big)\equiv\frac1{n!(\hbar\om\beta )^n}+o\Big(\frac1{\beta ^n}\Big)\,,
\end{equation}
which leads to
\begin{equation}
\label{facst1}
a^{(0)}_n(\beta )=a^{({\rm m})}_n(\beta )+o\Big(\frac1{\beta ^n}\Big),\,\,\text{for}\,\,\beta \downarrow0\,.
\end{equation}
\begin{Rem}
Equations \eqref{facst} and \eqref{facst1} explain that, in some sense, at least in this simple case of the harmonic oscillator, in the limit of high energies (i.e., $\beta \downarrow0$), the monotone grand-partition function $Z_{\rm m}$ behaves, term-by-term, in the same way as $Z_{0}$, corresponding to classical particles. In the next section, we provide a more refined analysis concerning this fact.

For the appearance (i.e., \eqref{facst}) of the Gibbs correction term $n!$,
we can immediately argue that the monotone Fock space will naturally take into account the indistinguishability of particles.
\end{Rem}

Now, we move on to the limit of low energies (typically when dealing with the so-called ground states) $\beta \uparrow+\infty$. Indeed,
$$
\prod_{k=1}^n\frac1{\big(e^{k\beta \hbar\om}-1\big)}\approx\frac1{e^{\beta \hbar\om\sum_{k=1}^n k}}=e^{-\beta \hbar\om\frac{n(n+1)}2}\,,
$$
and, thus,
$$
Z_{\rm m}\approx\sum_{n=0}^{+\infty}z^ne^{-\frac{n^2\beta \hbar\om}2}\,.
$$

Relative to the Boltzmann partition function, according to the reasoning above, we obtain:
$$
Z_{0}\approx\sum_{n=0}^{+\infty}\frac{z^n}{n!}e^{-\frac{n\beta \hbar\om}2},\,\,\text{for}\,\, \beta \uparrow+\infty\,.
$$

Now, using the Stirling formula, when $n\uparrow+\infty$ and $\beta \uparrow+\infty$, retaining only the leading terms, we obtain:
\begin{equation}
\label{betai}
a^{(0)}_n(\beta )\approx e^{-n(\beta \hbar\om/2+\ln n)}, \,\,\text{whereas}\,\,a^{({\rm m})}_n(\beta )\approx e^{-n^2\beta \hbar\om/2}\,.
\end{equation}
\begin{Rem}
\label{pexp}
In the high-density regime described by $\beta ,n\uparrow+\infty$, that is, when the state of the matter is very condensed, the monotone particles obey a kind of exclusion principle analogous to the
{\it Pauli exclusion principle}  for fermions. The heuristic Formula \eqref{betai} seems to confirm the existence of such a principle.
\end{Rem}

\section{The Low-Density Regime}
\label{2mic2}

We discuss the low-density regime corresponding to $\beta ,z\approx 0$ by showing that, in such a limit, 
$$
Z_{\rm m}(\beta ,z)\approx Z_{0}(z,\beta ),\,\,\text{for}\,\, \beta ,z\to 0\,.
$$
\begin{Prop}
\label{prop4}
For the monotone and Boltzmann grand-partition functions, we have:
\begin{equation}
\label{setm}
(1-f(\beta ,z))Z_{0}(z,\beta )\leq Z_{\rm m}(\beta ,z)\leq Z_{0}(z,\beta )\,,
\end{equation}
where $f(\beta ,z):=\frac{z^2e^{\beta \hbar\om}}{4(e^{\beta \hbar\om}-1)}$.
\end{Prop}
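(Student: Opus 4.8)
The plan is to take the upper bound $Z_{\rm m}\le Z_0$ for granted, since it is exactly the second half of Proposition \ref{gpmo}, and to concentrate on the lower bound, which after rearrangement is equivalent to the estimate $Z_0-Z_{\rm m}\le f(\beta ,z)\,Z_0$. Writing $x:=e^{\beta\hbar\om}>1$, so that $\z=x^{1/2}/(x-1)$, the two grand-partition functions are power series in $z$ whose coefficients are $a^{(0)}_n=\z^n/n!=x^{n/2}/\big(n!\,(x-1)^n\big)$ and $a^{({\rm m})}_n=x^{n/2}\prod_{k=1}^n(x^k-1)^{-1}$. Since $a^{(0)}_0=a^{({\rm m})}_0=1$ and $a^{(0)}_1=a^{({\rm m})}_1=\z$, the difference $Z_0-Z_{\rm m}=\sum_{n\ge2}\big(a^{(0)}_n-a^{({\rm m})}_n\big)z^n$ starts at order $z^2$; as all its terms are nonnegative for $z\ge0$ (again by Proposition \ref{gpmo}), I may establish the bound by comparing the two series term by term.

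The core of the argument is the single-index inequality
\[
a^{(0)}_n-a^{({\rm m})}_n\le \frac{x}{4(x-1)}\,a^{(0)}_{n-2},\qquad n\ge2 .
\]
Granting it, summation over $n\ge2$ gives $Z_0-Z_{\rm m}\le\frac{x}{4(x-1)}\,z^2\sum_{m\ge0}a^{(0)}_m z^m=f(\beta ,z)\,Z_0$, which is precisely the desired lower bound. To prove the displayed inequality I would first factor out the common quotient: setting $c_k:=k(x-1)/(x^k-1)\in(0,1]$ one has $a^{({\rm m})}_n=a^{(0)}_n\prod_{k=1}^n c_k$, while $a^{(0)}_n=\frac{x}{n(n-1)(x-1)^2}\,a^{(0)}_{n-2}$; after cancelling the positive factors the claim collapses to the clean estimate
\[
1-\prod_{k=1}^n c_k\le \frac{n(n-1)(x-1)}{4}.
\]

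For this last inequality I would invoke the elementary product bound $1-\prod_{k=1}^n c_k\le\sum_{k=1}^n(1-c_k)$ (valid because each $1-c_k\in[0,1)$), discard the $k=1$ term since $c_1=1$, and insert the per-factor estimate $1-c_k\le (k-1)(x-1)/2$; summing via $\sum_{k=2}^n(k-1)=n(n-1)/2$ then reproduces the constant $1/4$ exactly. The main obstacle is precisely this per-factor bound. Clearing denominators and using $x^k-1=(x-1)\sum_{l=0}^{k-1}x^l$, it reduces to the polynomial inequality $\sum_{j=0}^{k-1}\big[(k-1)-2j\big]x^j\le0$ for $x\ge1$. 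I expect to settle it through the antisymmetry $b_{k-1-j}=-b_j$ of the coefficients $b_j:=(k-1)-2j$: pairing the terms indexed by $j$ and $k-1-j$ yields $b_j\big(x^j-x^{k-1-j}\big)$, and for every $j$ the factors $b_j$ and $x^j-x^{k-1-j}$ carry opposite signs when $x\ge1$, so each pair contributes a nonpositive amount. This Chebyshev/rearrangement step is the only genuinely non-routine point; once it is in place, the summations and cancellations reducing back to \eqref{setm} are mechanical.
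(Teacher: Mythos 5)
Your proof is correct, and it follows a genuinely different route from the paper's. Both arguments reduce the lower bound to the same term-by-term estimate---your inequality $a^{(0)}_n-a^{({\rm m})}_n\le \frac{x}{4(x-1)}\,a^{(0)}_{n-2}$ is, after the cancellation you indicate, exactly the paper's $a^{(0)}_n-a^{({\rm m})}_n\le\frac{n(n-1)}{4}(x-1)\,a^{(0)}_n$---and both then sum over $n$ against the series for $Z_0$ (the paper via $\sum_n n(n-1)\frac{y^n}{n!}=y^2e^y$, you via an index shift; these are the same computation). The difference lies in how that estimate is proved. The paper Taylor-expands $\Delta_n(x)=1-n!\big/\prod_{k=1}^n\big(\sum_{l=0}^{k-1}x^l\big)$ about $x=1$ with second-order Lagrange remainder, as in \eqref{negd0}, which requires $\Delta_n(1)=0$, $\Delta_n'(1)=\frac{n(n-1)}{4}$, and the sign condition $\Delta_n''<0$ on $(1,x)$; this last fact is \emph{not} proved analytically but only verified numerically in Appendix \ref{appe}, which is precisely why the paper labels its proof a sketch. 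You instead factor $a^{({\rm m})}_n=a^{(0)}_n\prod_{k=1}^n c_k$ with $c_k=k(x-1)/(x^k-1)\in(0,1]$, apply the elementary bound $1-\prod_k c_k\le\sum_k(1-c_k)$, and prove the per-factor estimate $1-c_k\le\frac{(k-1)(x-1)}{2}$ by reducing it to $\sum_{j=0}^{k-1}\big[(k-1)-2j\big]x^j\le0$ and pairing the index $j$ with $k-1-j$: by antisymmetry of the coefficients each pair contributes $b_j\big(x^j-x^{k-1-j}\big)\le0$ for $x\ge1$, and the middle term (for odd $k$) vanishes. I checked the reduction and the pairing argument; they are correct, and $\sum_{k=2}^n(k-1)=\frac{n(n-1)}{2}$ reproduces the constant $\frac14$ exactly. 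The net effect is that your argument is fully rigorous and elementary, so it actually closes the analytic gap that the paper leaves to numerics (in essence, you prove a per-factor version of the global concavity property the paper assumes); what you give up is only the structural interpretation of $\frac{n(n-1)}{4}$ as the derivative $\Delta_n'(1)$, which in the paper's approach signals that the estimate is sharp to first order as $x\downarrow1$.
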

\begin{proof}(sketch).
With $x:=e^{\beta \hbar\om}$ and 
\begin{equation}
\label{negd}
\D_n(x):=1-\frac{n!}{(1+x)(1+x+x^2)\cdots\Big(\sum_{k=0}^{n-1}x^k\Big)}\,,\quad n=1,2,\dots\,,
\end{equation}
noting that $\beta \downarrow 0\iff x\downarrow 1$, for $n\geq2$ (for $n=0,1$, both coincide), we obtain:
\begin{equation}
\label{negd0}
a_n^{(0)}-a_n^{({\rm m})}=\frac1{n!(x-1)^n}\bigg(\D_n(1)+\D'_n(1)(x-1)+\frac{\D''_n(\xi_n)}2(x-1)^2\bigg)\,.
\end{equation}

Here, we apply the Taylor formula with second-order Lagrange remainder; thus, $\xi_n$ is a certain number (obviously, depending on $n$) in the interval $(1,x)$.

Since $\D_n(1)=0$, $\D'_n(1)=\frac{n(n-1)}4$ and, finally, we obtain the evidence that $\D''_n(\xi_n)<0$ whenever $x>1$ (see the Appendix \ref{appe}), collecting these elements together, we have:
$$
a_n^{(0)}-a_n^{({\rm m})}\leq\frac{\sqrt{x}}{n!(x-1)^n}\Big(\frac{n(n-1)}4(x-1)\Big)\equiv \frac{n(n-1)}4(x-1)a_n^{(0)}\,.
$$

After defining $y:=\frac{z\sqrt{x}}{x-1}$, we can sum up, obtaining:
\begin{equation}
\label{sppp}
\begin{split}
Z_{0}-Z_{\rm m}\leq&\frac{(x-1)}4\sum_{n=0}^{+\infty}n(n-1)\frac{y^n}{n!}=\frac{(x-1)}4 y^2\frac{\di^2 Z_0}{\di y^2}\\
\equiv&\frac{z^2 x}{4(x-1)}e^y\equiv\frac{z^2x}{4(x-1)}Z_{0}\,.
\end{split}
\end{equation}

Now, \eqref{sppp}
easily leads to: 
$$
Z_{0}\bigg(1-\frac{z^2e^{\beta \hbar\om}}{4(e^{\beta \hbar\om}-1)}\bigg)\leq Z_{\rm m}\,,
$$
and the proof follows, according to Proposition \ref{gpmo}.
\end{proof}

In order to progress to the investigation of the low-density regime, \eqref{setm} reads:
\begin{equation}
\label{sppp1}
(1-f(\beta ,z))\leq\frac{Z_{\rm m}(\beta ,z)}{Z_{0}(\beta ,z)}\leq1\,,
\end{equation}
and \eqref{sppp1} provides a useful condition, if and only if $(1-f(\beta ,z))>0$. 

It is now convenient to define $t:=z^2/4$ with the limitations $0\leq t<1$, obtaining:
$$
1-\frac{xt}{x-1}>0\iff x>\frac1{1-t}\,.
$$
By passing to the logarithm and restoring the variables $z$ and $\beta $, we obtain:
$$
\beta \hbar\om>-\ln\big(1-(z/2)^2\big)\,.
$$

On the other hand, if for some $0<\g<1$,
\begin{equation}
\label{sppp2}
\beta \hbar\om\geq-\ln\big(1-(z/2)^{2\g}\big)>-\ln\big(1-z^2/4\big)\,,
\end{equation}
again, in terms of $x$ {and}
 $t$, we have:
 $${x\geq\frac{1}{1-t^\g}}\iff {\frac{xt^\g}{x-1}\leq{1}}\iff{\frac{xt}{x-1}\leq t^{1-\g}\to0}$$
whenever $t\downarrow0$ or, equivalently, $z\downarrow0$.

Restoring the original variables, the above computation simply means that:
$$
0\leq f(\beta ,z)\equiv\frac{z^2e^{\beta \hbar\om}}{4(e^{\beta \hbar\om}-1)}\leq(z^2/4)^{1-\g}\to 0\,,
$$
when $z$ and, necessarily, also $\beta $ in the chosen region change to $0$.

Thus, we prove the following:
\begin{Prop}
\label{lllddd}
For each fixed $0<\g<1$ and $(z,\beta )$ in the region $R$ delimited by $0\leq z<2$ and by the condition
\begin{equation*}
\beta \hbar\om>-\ln\big(1-(z/2)^{2\g}\big)\,,
\end{equation*}
we obtain
\begin{equation*}
\lim_{\stackrel{(\beta ,z)\to(0,0)}{(\beta,z)\in R}}\frac{Z_{\rm m}(\beta ,z)}{Z_{0}(\beta ,z)}=1\,.
\end{equation*}
\end{Prop}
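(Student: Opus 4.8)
The plan is to deduce this proposition directly from the two-sided estimate \eqref{setm} of Proposition \ref{prop4}, so that the entire content reduces to showing that the error term $f(\beta,z)$ vanishes in the prescribed region. First I would record that $Z_{0}(\beta,z)=e^{z\z}\geq 1>0$ for all $\beta>0$ and $z\geq 0$, which legitimises dividing \eqref{setm} by $Z_{0}$ and produces the sandwich \eqref{sppp1}, namely $1-f(\beta,z)\leq Z_{\rm m}/Z_{0}\leq 1$. From this point the whole problem becomes proving that $f(\beta,z)\to 0$ along every path approaching $(0,0)$ inside $R$.

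The core step is the change of variables $x:=e^{\beta\hbar\om}$ and $t:=z^2/4$, under which $f=\tfrac{xt}{x-1}$ and the defining condition of $R$ rewrites as $x\geq\tfrac{1}{1-t^{\g}}$, equivalently $x(1-t^{\g})\geq 1$. Rearranging gives $x-1\geq xt^{\g}$, that is $\tfrac{xt^{\g}}{x-1}\leq 1$, and the factorisation $f=t^{1-\g}\cdot\tfrac{xt^{\g}}{x-1}$ then yields the key bound $0\leq f\leq t^{1-\g}$ valid throughout $R$. Since $0<\g<1$ forces $1-\g>0$, and since approaching $(0,0)$ within $R$ in particular sends $z\to 0$ and hence $t\to 0$, the majorant $t^{1-\g}$ tends to $0$ irrespective of the path. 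A squeeze then delivers $f\to 0$, and combined with \eqref{sppp1} this gives $Z_{\rm m}/Z_{0}\to 1$.

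The main subtlety, and the reason the region $R$ carries the exponent $2\g$ rather than the naive threshold, is precisely this uniformity. The weaker requirement $\beta\hbar\om>-\ln(1-z^2/4)$ only guarantees $f<1$, keeping the lower bound in \eqref{sppp1} positive, but does not force $f$ to vanish; one could then approach the origin along the boundary with $f$ bounded away from $0$. Strengthening the threshold to $\beta\hbar\om>-\ln(1-(z/2)^{2\g})$ is exactly what inserts the surplus factor $t^{1-\g}$ that annihilates $f$ in the limit. I expect no genuine analytic difficulty beyond this bookkeeping: once Proposition \ref{prop4} is granted, the statement is a short squeeze argument, and the only point requiring real care is checking that every admissible approach to $(0,0)$ keeps $(\beta,z)$ inside $R$, so that the estimate $f\leq t^{1-\g}$ applies uniformly along the path rather than merely pointwise.
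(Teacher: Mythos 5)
Your proposal is correct and is essentially identical to the paper's own argument: both reduce the statement to the sandwich \eqref{sppp1} from Proposition \ref{prop4}, pass to the variables $x=e^{\beta\hbar\om}$, $t=z^2/4$, and use the defining condition of $R$ to extract the bound $0\leq f\leq t^{1-\g}\to0$, concluding by a squeeze. Your additional remark explaining why the exponent $2\g$ (rather than the naive threshold) is needed matches the paper's own motivation for introducing \eqref{sppp2}.
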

Concerning the low-density regime, Proposition \ref{lllddd} has the following meaning. Indeed, for $\beta ,z\approx0$ in the region defined by \eqref{sppp2},
\begin{equation}
\label{sppp4}
Z_{\rm m}(\beta ,z)\approx(1-f(\beta ,z))Z_{0}(\beta ,z)\,,
\end{equation}
where the minus sign is explained in Remark \ref{pexp} as an analogue of the Pauli exclusion principle for monotone particles. Such a {\it monotone exclusion principle} tends to become very relevant in the high-density regime ($\beta \uparrow+\infty$, see Remark \ref{pexp}), whereas it tends to vanish in the low-density regime (Proposition \ref{lllddd}). In the latter case, the function $1-f(\beta ,z)$ provides the correction for the thermodynamic potentials relative to those of the monotone case, compared with the analogous ones relative to Boltzmann case.

Moreover, using \eqref{sppp4}, we can determine that the average number of particles is
\begin{align*}
N_{\rm m}(\beta ,z)=&z\frac{\partial\ln Z_{\rm m}(\beta ,z)}{\partial z}\approx z\frac{\partial\ln(1-f(\beta ,z))}{\partial z}+N_{0}(\beta ,z)\\
=&N_{0}(\beta ,z)-\frac{z^2e^{\beta \hbar\om}}{2\big((e^{\beta \hbar\om}-1)-(z/2)^2e^{\beta \hbar\om}\big)}\,,
\end{align*}
where the second addendum in the last member is, indeed, negative because of \eqref{sppp2}.

We conclude by noting that Proposition \ref{lllddd} allows one to compute the asymptotics of {\it Connes' spectral action} (cf. \cite{CC}), associated with the average number of monotone particles, as described in \cite{CF1} for $q$-particles, but with the condition described in \eqref{sppp2}. We postpone the investigation of this aspect for a forthcoming analysis.

\section{Conclusions}

The previous analysis concerning monotone models suggests that their potential applications to physics appear to be meaningful and fruitful, even if the explicit computation of the thermodynamic quantities seems to be rather complicated. Therefore, it is natural to systematically address the investigation of the statistical properties of monotone systems, even if the involved grand-partition function is not completely computable for most infinite dimensional systems. We thus list some natural questions which could be addressed in future investigations below. 

The first one concerns the models in which the involved energy levels are degenerate. A simple example to address is the case when the degeneracy of the energy levels is uniform. 
A concrete but more complicated example is the isotropic harmonic oscillator in a 
$d$-dimensional environment, $d>1$. Such a degeneracy is not uniform but can easily be computed, obtaining: 
\begin{equation*}
\begin{split}
\eps=\hbar\om\sum_{i=1}^d (n_i+1/2)\,,&\quad n_i\in\bn,\,\,i=1,2,\dots,d\,,\\
g(\eps)=\binom{d+n-1}{n}\,,&\quad n\in\bn\,.
\end{split}
\end{equation*}

The second natural question to be addressed is the investigation of the free gas formed of monotone particles. This is connected with the degeneracy and can be addressed either by considering the free particles confined in a box and then ``passing to the continuum'', as in Sections 4 and 5 of \cite{CF1}, or by removing the harmonic potential (i.e., performing an appropriate limit as the Hook constant changes to $0$). For $d=1$, the second approach seems to be directly applicable because, as pointed out above, the energy levels are non-degenerate, whereas the first approach already involves a non-trivial degeneracy in the case $d=1$.

The third but not least significant question is the systematic investigation of the statistical properties of monotone systems, with particular attention given to low- and high-density regimes. As explained above, the detailed investigation of the low-density regime also involves the explicit computation of the asymptotics for $\beta \downarrow 0$, which is connected to Connes' spectral action (e.g., \cite{CC}) for monotone systems and, thus, to noncommutative geometry. 

The study of the high-density regime, particularly at zero temperature (i.e., in the limit $\beta \uparrow\infty$), could explain the quantitative effect of the decimation induced by the monotone prescription. This is simply the effect that we previously called the ``monotone exclusion principle'', that is, the analogue of the Pauli exclusion principle occurring in the case of fermions at zero temperature.
 
For example, we can argue that the statistical properties of the (block-) monotone systems might have reasonable applications to complex systems which are absorbing (or emitting) quanta of energy. The system that we have in mind is an atom which is capturing a photon and then passing into an excited state or even emitting, again, a photon reaching a more stable state.
Another example concerns the nuclei of fissile material (such as uranium U${}^{92}_{235}$) in a nuclear plant which are capturing thermal neutrons and undergoing nuclear fission.
In both cases, the relevant subject might not be the absorbed particles (bosons in the former case and fermions in the latter case) but the complex system which is absorbing (or emitting) the energy, according to the order of the eigenvalues of its Hamiltonian. These considerations might suggest that the block-monotone prescription has some role in the investigation of such complex systems from a statistical point of view.

We conclude by pointing out that such a monotone exclusion principle should not allow for the occurrence of the condensation phenomena of monotone particles in the fundamental state. It would interesting to provide a rigorous proof of this conjecture for the free gas of monotone particles in a $d$-dimensional (or, more concretely, in the euclidean 3-dimensional) space. 

\vspace{6pt}
%
%
%
%

\thanks{F.C. and F.F. acknowledge the MIUR Excellence Department Project award\-ed to the De\-part\-ment of Mathematics, University of Rome Tor Vergata, CUP E83C18000100006, and INDAM-GNAMPA.
F.C. acknowledges the ERC Advanced Grant 669240 QUEST, “Quantum Algebraic Structures and Models”.}


\appendix
\section[\appendixname~\thesection]{}\label{appe}

The main aim of the present work is to investigate whether particles obeying the monotone prescription have potential physical applications.  For this reason, we did not pursue
the proof of Proposition \ref{prop4} in much detail in this stage. 
In particular, in the sketch of the proof, we used the negativity condition of the second derivative of the $\D_n(x)$ for $n=2,3,\dots$ and for all $x>1$. Even though this property is reasonable and intuitive, it is complex to provide analytic proof of it. For this reason, we numerically computed such a second derivative for different values of $n$, reporting their plots in Figure~\ref{figa1}.

On the other hand, while it is easy to see that $\D'_n(1)=\frac{n(n-1)}4$, in order to estimate the last addendum in \eqref{negd0}, we fitted $\frac{\D''_n(1)}2$ with a polynomial of degree 4 in the form
\begin{equation*}
p(n)=b_4 n^4+b_3 n^3+b_2 n^2+b_1 n+b_0\,,
\end{equation*}
where
$$
b_0\approx0,\,\,b_1\approx0.09028,\,\,b_2\approx-0.13542,\,\,b_3\approx0.07639,\,\,b_4\approx-0.03125\,,
$$
with negligible errors (see Figure~\ref{figa2}).


%
\newpage

\begin{figure}
\centering
  \includegraphics[width=15cm]{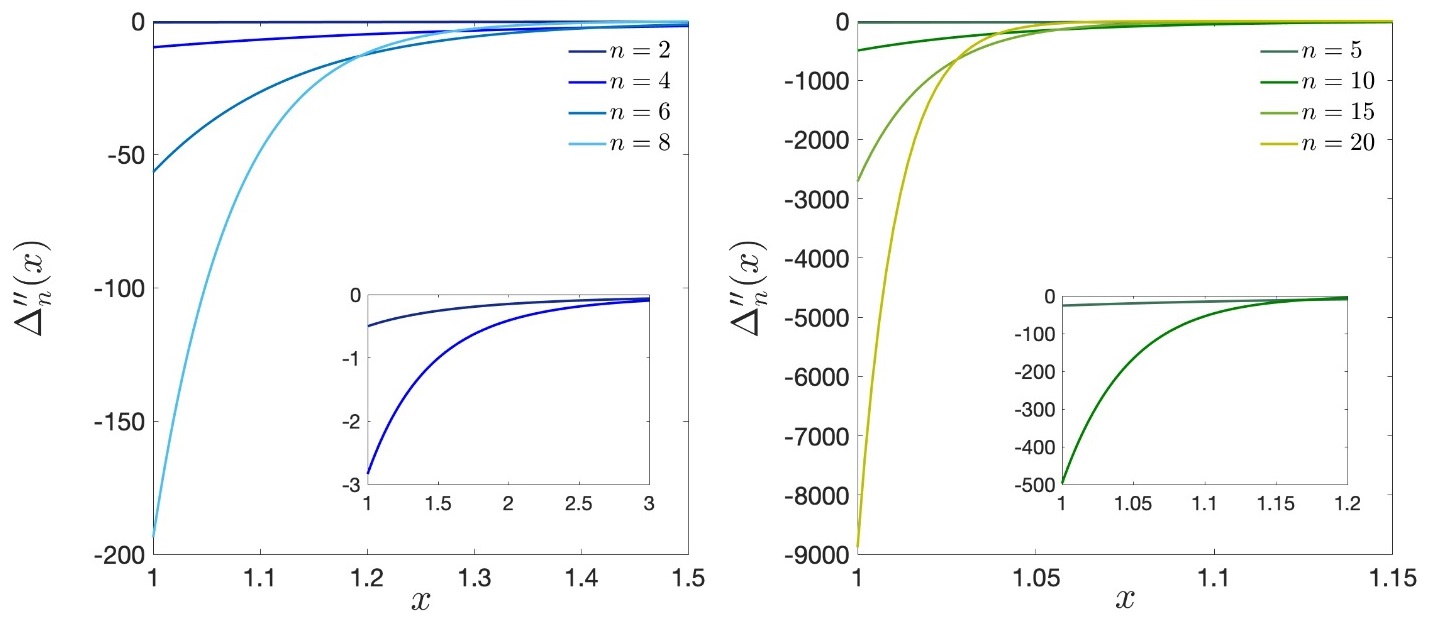}
  \caption{Second derivatives of $\D_n(x)$, $x>1$.}
  \label{figa1}
\end{figure}\vspace{-6pt}

\medskip

\begin{figure}
\centering
  \includegraphics[width=10cm]{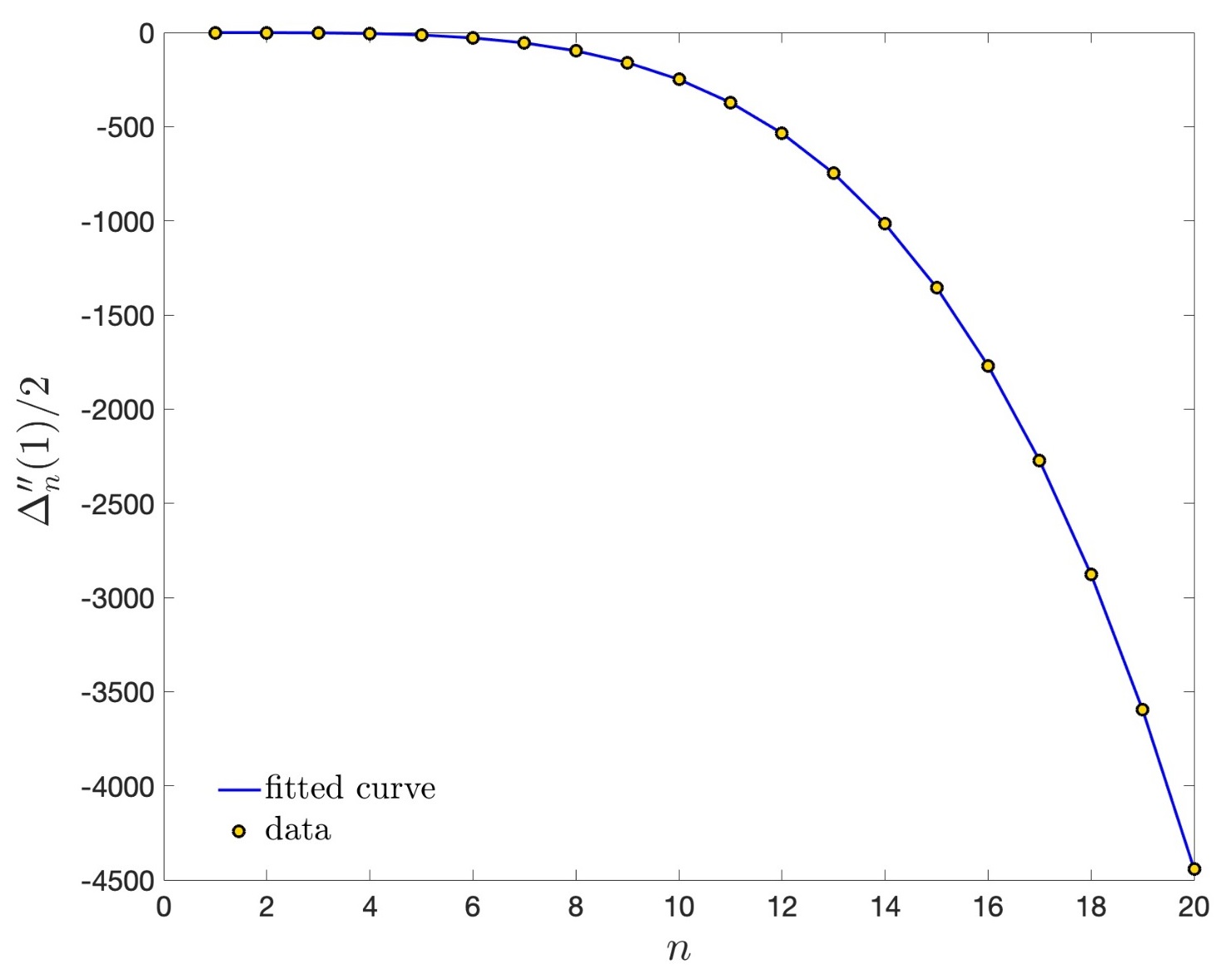}
  \caption{Fit of $\frac{\D''_n(1)}2$.} 
  \label{figa2}
\end{figure}

\newpage


\bigskip

\end{document}